 \newtheorem{thm}{Theorem}[section]
 \newtheorem{cor}[thm]{Corollary}
 \newtheorem{lem}[thm]{Lemma}
 \newtheorem{prop}[thm]{Proposition}
 \theoremstyle{definition}
 \theoremstyle{remark}
 \numberwithin{equation}{section}
\newcommand{\cA}{\mathcal{A}}
\newcommand{\cB}{\mathcal{B}}
\newcommand{\cH}{\mathcal{H}}
\newcommand{\cM}{\mathcal{M}}
\newcommand{\cU}{\mathcal{U}}
\newcommand{\Cx}{\mathbb{C}}
\newcommand{\idty}{\mathbbm{1}}
\newcommand{\id}{{\rm id}}
\newcommand{\Tr}{{\rm Tr}}
\newcommand{\ran}{{\rm ran}}
\newcommand{\be}{\begin{equation}}
\newcommand{\ee}{\end{equation}}
\newcommand{\bea}{\begin{eqnarray}}
\newcommand{\eea}{\end{eqnarray}}
\newcommand{\beann}{\begin{eqnarray*}}
\newcommand{\eeann}{\end{eqnarray*}}
\newcommand{\eq}[1]{(\ref{#1})}
\def\condex{{\mathbbm E}}
\def\norm#1{\Vert#1\Vert}
\newcommand\Norm[2][]{\ensuremath{\bigl| \bigl|\, #2\,\bigr| \bigr|_{#1}}}
\newcommand{\cN}{\mathcal{N}}
\begin{document}
%
\title[Local approximation of observables]
 {Local approximation of observables \\ and commutator bounds}
\author[B. Nachtergaele]{Bruno Nachtergaele}
\address{%
Department of Mathematics\\
University of California, Davis\\
One Shields Avenue\\
Davis, CA 95616\\
USA}
\email{bxn@math.ucdavis.edu}

\thanks{Based on work supported in part by the National Science Foundation under grant
DMS-1009502 and the European project COQUIT}
\author[V.B. Scholz]{Volkher B. Scholz}
\address{Institut f\"ur Theoretische Physik\\
Leibniz Universit\"at Hannover\\
Appelstra\ss e 2\\
30167 Hannover\\
Germany}
\email{volkher.scholz@itp.uni-hannover.de}

\author[R.F. Werner]{Reinhard F. Werner}
\address{Institut f\"ur Theoretische Physik\\
Leibniz Universit\"at Hannover\\
Appelstra\ss e 2\\
30167 Hannover\\
Germany}
\email{reinhard.werner@itp.uni-hannover.de}
\subjclass[2000]{Primary 46L10; Secondary 46L53}

\keywords{property P, quantum conditional expectation, support of observables, small commutator}

\date{March 10, 2011}

\begin{abstract}
We discuss conditional expectations that can be used as generalizations
of the partial trace for quantum systems with an infinite-dimensional Hilbert
space of states.
\end{abstract}

\maketitle

\section{Introduction}\label{sec:intro}

We denote by $\cB(\cH)$ the bounded linear operators on a Hilbert space
$\cH$, equipped with the operator norm, and for $A,B\in\cB(\cH)$, $[A,B]=AB-BA$ is the
commutator of $A$ and $B$. Let $\cH_1$ and $\cH_2$ be two Hilbert spaces and
$\cH=\cH_1\otimes\cH_2$ their tensor product. In this is note we consider the following
situation.  Suppose $A\in\cB(\cH)$ and $\epsilon \geq 0$ are such that
\be\label{commutator_bound}
\Norm{[A, \idty\otimes B]} \leq\epsilon\Vert A\Vert \Vert B\Vert \quad\mbox{ for all } B\in \cB(\cH_2).
\ee
We will prove that there exists $A'\in\cB(\cH_1)$ such that $\Vert A - A'\otimes \idty\Vert\leq \epsilon \Vert A\Vert$.
The case $\epsilon = 0$ is trivial, since in that case we have $A\in (\idty\otimes\cB(\cH_2))'
=\cB(\cH_1)\otimes\idty$, and therefore there exists $A'\in\cB(\cH_1)$ such that $A=A'\otimes
\idty$.  If $\cH_2$ is finite-dimensional, the result is also well-known. In that case one can take
for $A'$ the normalized partial trace of $A$:
$$
A'=\frac{1}{\dim \cH_2}\Tr_{\cH_2} A.
$$
To see that this choice for $A'$ does the job, it suffices to note that
$$
A^\prime\otimes \idty = \int_{\cU(\cH_2)} dU \, (\idty\otimes U^*)A(\idty\otimes U),
$$
where $dU$ is the Haar measure on the unitary group, $\cU(\cH_2)$, of $\cH_2$.
Then, by the assumption \eq{commutator_bound} one has
\be\label{approximation}
\Vert A^\prime\otimes \idty -A\Vert
\leq \int_{\cU(\cH_2)} dU \, \Norm{(\idty\otimes U^*)[A,(\idty\otimes U)]} \leq \epsilon\Vert A\Vert\, .
\ee
Our direct motivation for extending this result to the general case in which $\cH_2$ is allowed
to be infinite-dimensional stems from the recent applications of Lieb-Robinson bounds
\cite{lieb:1972} to obtaining local approximations of time-evolved observables in quantum
mechanics in the works \cite{bravyi:2006,nachtergaele:2006b,bachmann:2011}.

\section{The main lemma}\label{sec:main}

The existence of the the approximation $A'\in\cB(\cH_1)$ satisfying the error bound
\eq{approximation} is shown in the following lemma. The lemma shows that, as in the
finite-dimensional case, one can take $A'$ to given by a completely positive
linear map $\condex\cB(\cH_1\otimes\cH_2)\to\cB(\cH_1)$ which has the defining
properties of a conditional expectation.

\begin{lem}\label{lem:main}
Let $\cH_1$ and $\cH_2$ be Hilbert spaces. Then there is a completely positive linear map $\condex:\cB(\cH_1\otimes\cH_2)\to\cB(\cH_1)$
with the following properties:
\begin{enumerate}
\item For all $A\in\cB(\cH_1)$, $\condex(A\otimes\idty)=A$;
\item Whenever $A\in\cB(\cH_1\otimes\cH_2)\to\cB(\cH_1)$ satisfies the commutator bound
$$
\Norm{[A,\idty\otimes B]} \leq \epsilon \Vert A\Vert\Vert B\Vert\, \mbox{ for all } B\in\cB(\cH_2).
$$
$\condex(A)\in\cB(\cH_1)$ satisfies the estimate
\be
\Vert \condex(A)\otimes \idty -A\Vert\leq \epsilon\Vert A\Vert;
\nonumber
\ee
\item For all $C,D\in \cB(\cH_1)$ and $A\in \cB(\cH_1\otimes\cH_2)$, we have
\be
\condex(CAD)=C\condex(A)D.
\label{condex}\nonumber
\ee
\end{enumerate}
\end{lem}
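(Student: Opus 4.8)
The plan is to construct the conditional expectation by mimicking the finite-dimensional averaging formula $A \mapsto \int_{\cU(\cH_2)} dU\,(\idty\otimes U^*)A(\idty\otimes U)$, but since the Haar measure is unavailable on the infinite-dimensional unitary group, I would replace this integral by an invariant mean. Concretely, the unitary group $\cU(\cH_2)$ is amenable as a discrete group (or one can work with a suitable amenable subsemigroup generated by unitaries), so there exists a left-invariant mean $m$ on the space of bounded functions on $\cU(\cH_2)$. For fixed $A \in \cB(\cH)$ and fixed vectors $\xi,\eta \in \cH_1$, the function $U \mapsto \langle \xi\otimes\psi,\,(\idty\otimes U^*)A(\idty\otimes U)\,\eta\otimes\phi\rangle$ is bounded, so applying the mean defines a sesquilinear form; the challenge is to extract from this a genuine operator $\condex(A)\in\cB(\cH_1)$ acting only on the first factor.

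The key steps, in order, are as follows. First I would fix a normalized vector $\phi\in\cH_2$ and define, for $A\in\cB(\cH)$, a sesquilinear form on $\cH_1$ by $B_A(\xi,\eta) = m\bigl(U \mapsto \langle \xi\otimes\phi,\,(\idty\otimes U^*)A(\idty\otimes U)\,(\eta\otimes\phi)\rangle\bigr)$, which is bounded by $\Vert A\Vert\,\Vert\xi\Vert\,\Vert\eta\Vert$ and hence determines a unique operator $\condex(A)\in\cB(\cH_1)$ via the Riesz representation theorem. Second, I would verify that the definition is independent of the choice of $\phi$ by using the invariance of the mean: for two unit vectors $\phi,\phi'$ one picks a unitary $V$ with $V\phi=\phi'$ and exploits left-invariance to show the two forms agree. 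Third, complete positivity follows because each map $A\mapsto(\idty\otimes U^*)A(\idty\otimes U)$ is a $*$-homomorphism hence completely positive, and the invariant mean is a positive linear functional, so applying it entrywise to an $n\times n$ positive block matrix $[A_{ij}]$ preserves positivity. Fourth, property (1) is immediate since $(\idty\otimes U^*)(A\otimes\idty)(\idty\otimes U) = A\otimes\idty$ is constant in $U$, and the mean of a constant is that constant. Property (3) follows because $(\idty\otimes U^*)(C\otimes\idty)A(D\otimes\idty)(\idty\otimes U) = (C\otimes\idty)\bigl[(\idty\otimes U^*)A(\idty\otimes U)\bigr](D\otimes\idty)$, so the $C,D$ factors pull straight through the mean. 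Finally, for property (2) I would estimate, for each $U$,
\[
\Norm{(\idty\otimes U^*)A(\idty\otimes U) - A} = \Norm{(\idty\otimes U^*)[A,\idty\otimes U]} \leq \epsilon\Vert A\Vert,
\]
using $\Vert U\Vert = 1$ and the commutator bound \eq{commutator_bound} with $B=U$; applying the mean (which has norm one and is positive) then yields $\Vert\condex(A)\otimes\idty - A\Vert \le \epsilon\Vert A\Vert$, where I must also check that $\condex(A)\otimes\idty$ really is the mean of $U\mapsto(\idty\otimes U^*)A(\idty\otimes U)$ in the appropriate weak sense.

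I expect the main obstacle to be making the construction of $\condex(A)$ genuinely yield an operator of the form (something) $\otimes\,\idty$ on $\cH$ and controlling the approximation in \emph{operator norm} rather than merely weakly. The invariant mean produces $\condex(A)$ only as a weak limit of averages, so while the weak matrix elements behave well, establishing that $\Vert\condex(A)\otimes\idty - A\Vert$ is controlled in norm requires care: the estimate $\Norm{(\idty\otimes U^*)A(\idty\otimes U)-A}\le\epsilon\Vert A\Vert$ holds uniformly in $U$, and the point will be to transfer this uniform bound through the mean by testing against arbitrary unit vectors $\Psi\in\cH$ and showing $\vert\langle\Psi,(\condex(A)\otimes\idty - A)\Psi\rangle\vert \le \epsilon\Vert A\Vert$, then bootstrapping to the full operator norm. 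Verifying the $\phi$-independence and the normalization of the mean against the tensor structure is the delicate technical heart; everything else reduces to the algebraic identities above.
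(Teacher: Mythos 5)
The construction stands or falls with the existence of a left-invariant mean on the space of \emph{all} bounded functions on $\cU(\cH_2)$ viewed as a discrete group, and that is precisely where it falls: $\cU(\cH_2)$ is \emph{not} amenable as a discrete group once $\dim\cH_2\geq 2$. It contains nonabelian free subgroups (this is the algebraic fact behind the Banach--Tarski paradox for $SO(3)\subset SU(2)\subset \cU(\Cx^2)$), and a group containing a free subgroup on two generators admits no invariant mean on its bounded functions, since subgroups of amenable groups are amenable. Compactness of the finite-dimensional unitary groups gives amenability only as \emph{topological} groups, i.e.\ a Haar integral on continuous functions, not an invariant mean on $\ell^\infty$ of the underlying discrete group. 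So the object $m$ on which everything rests does not exist as stated. Granted a suitable bi-invariant mean, the rest of your argument is essentially sound: the $\phi$-independence (which needs right- rather than left-invariance), complete positivity, properties (1) and (3), and the norm estimate in (2) --- which, contrary to your worry, needs no bootstrapping, because $\norm{T}=\sup\vert\langle\Psi,T\Phi\rangle\vert$ over unit vectors and each matrix element of $\condex(A)\otimes\idty-A$ is the mean of a function uniformly bounded by $\epsilon\norm A$. Indeed it is cleaner to define $\condex(A)\otimes\idty$ directly from the sesquilinear form on $\cH\times\cH$, so that its commutation with every $\idty\otimes U$ follows from invariance of the mean in one line and the vector $\phi$ never enters.

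Two repairs are available. The paper's route: for each finite-rank projection $P$ on $\cH_2$, average over the \emph{compact} group $\cU(P)=\{(\idty-P)+PUP\}$ with its Haar measure to get maps $\condex_P$ satisfying all the desired bounds, then take a weak-$*$ cluster point $\condex_\infty$ along a universal subnet of the net of such $P$; the limit commutes with $\idty\otimes U$ for all $U\in\cU(Q)$ and all finite-rank $Q$, these generate $\idty\otimes\cB(\cH_2)$ weakly, so $\condex_\infty(A)\in\cB(\cH_1)\otimes\idty$, and the error bound survives because the $(\epsilon\norm A)$-ball around $A$ is weak-$*$ compact. Alternatively, your parenthetical hedge can be turned into a proof: $\cU(\cH_2)$ \emph{is} amenable as a topological group in the strong topology (de la Harpe), the functions $U\mapsto\langle\Psi,(\idty\otimes U^*)A(\idty\otimes U)\Phi\rangle$ are right uniformly continuous for that uniformity, and a mean invariant on the uniformly continuous bounded functions suffices; or, for separable $\cH_2$, one may average over a discrete amenable subgroup of $\cU(\cH_2)$ acting irreducibly (e.g.\ the two Weyl unitaries of an irrational rotation generate a nilpotent, hence amenable, group with trivial commutant), which forces the averaged operator into $\cB(\cH_1)\otimes\idty$. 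One of these devices must replace the false discrete-amenability claim; as written, the proof has no starting point.
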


\begin{proof}
For any finite dimensional projection $P\in\cB(\cH_2)$ denote by $\cU(P)$ the compact group of unitary operators of the form
$U=(\idty-P)+PUP$, and by $\condex_P$ the averaging operator with the normalized Haar measure $dU$ on $\cU(P)$:
\be
  \condex_P(A)=\int_{\cU(P)}\mkern-10mu dU\ (\idty\otimes U^*)A(\idty\otimes U).
\label{EsubP}\ee
By the argument given in the introduction we have $\norm{A-\condex_P(A)}\leq \epsilon\norm A$
and for $C,D\in \cB(\cH_1)$ and $A\in \cB(\cH_1\otimes\cH_2)$, we have
$\condex_P(CAD)=C\condex_P(A)D$.
Moreover, if $P\geq Q$, we have $\cU(P)\supset\cU(Q)$, and hence
\be[(\idty\otimes U,\condex_P(A)]=0\quad
   \mbox{for}\ P\geq Q\ \mbox{and}\ U\in\cU(Q).
\label{EPQcom}\ee
Now let $(P(\alpha))_{\alpha\in I}$ be a universal subnet of the net of finite dimensional projections
over some directed index set $I$. Then since $\norm{\condex_{P(\alpha)}(A)}\leq\norm{A}$,
the universal subnet is bounded and therefore must be weak-*-convergent. We call the limit
$\condex_\infty(A)$. Clearly then, $\condex_\infty$ is linear, completely positive, leaves every
operator $A\otimes\idty$ fixed, and also satisfies the property \eq{condex}.
Moreover, if $A$ satisfies the commutator bound each
$\condex_P(A)$ lies in the compact $(\epsilon\norm A)$-ball around $A$ and so does the limit.

It remains to prove that we can write $\condex_\infty(A)=\condex(A)\otimes\idty$, i.e., that
$[\idty\otimes B,\condex_\infty(A)]=0$ for all $B\in\cB(\cH_2)$. By taking the limit of Eq. \eq{EPQcom} over $P$ along the chosen net, we find that this is true for any $B\in\cU(Q)$ for any finite dimensional $Q$. But these sets generates a weakly dense subalgebra of $\cB(\cH_2)$, which concludes the proof.
\end{proof}

Note that the map $\condex$ of Lemma \ref{lem:main} is completely positive and
unit preserving and therefore bounded (with $\norm{\condex}=1$) and hence
norm-continuous. Norm-continuity is however not always sufficient in applications.
It is sometimes important that the map $A\mapsto A'$ is continuous with respect to a
different, more suitable topology.  In \cite{bachmann:2011}, e.g.,
the local approximations appear in an integral and continuity is relied on to
insure the integrability of the integrand. Since in Lemma \ref{lem:main}
$A'$ is obtained as a weak cluster point, its continuity properties are not
obvious. Therefore, we consider other maps with the properties of a
conditional expecation 1 and 3 of  Lemma \ref{lem:main},
but with a slightly worse approximation property (to be precise, with the
$\epsilon$ in property 2 of  Lemma \ref{lem:main} replaced by $2\epsilon$)
 and which is continuous with respect to the weak (and $\sigma$-weak) operator
 topology.

\begin{prop}\label{prop:continuous}
Let $\cH_1$ and $\cH_2$ be Hilbert spaces and let $\rho$ be a normal state on $\cB(\cH_2)$. Define the
map $\condex_\rho=\id\otimes\rho$ by $\condex_\rho(A\otimes B)=\rho(B)A$ for all $A\in \cB(\cH_1)$ and $B\in \cB(\cH_2)$. Then, $\condex_\rho$ has the properties 1 and 3 of  Lemma \ref{lem:main}
and, whenever $A\in\cB(\cH_1\otimes\cH_2)$ satisfies the commutator bound
$$
\Vert [A,\idty\otimes B]\Vert \leq \epsilon\Vert A\Vert \Vert B\Vert\, \mbox{ for all } B\in\cB(\cH_2).
$$
we have
\be
\label{eq:continuous}
\Vert \condex_\rho(A) -A\Vert\leq 2\epsilon\Vert A\Vert.
\ee
\end{prop}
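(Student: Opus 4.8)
The plan is to treat properties 1 and 3 as routine consequences of the slice-map structure of $\condex_\rho$, and to reduce the approximation estimate \eqref{eq:continuous} to the bound already supplied by Lemma \ref{lem:main}, paying a single extra factor of $\epsilon$ along the way.

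First I would record the basic structure of $\condex_\rho=\id\otimes\rho$. Because $\rho$ is a \emph{normal} state, the slice map $\id\otimes\rho$ extends from elementary tensors to a well-defined, completely positive, unital and $\sigma$-weakly continuous map $\condex_\rho\colon\cB(\cH_1\otimes\cH_2)\to\cB(\cH_1)$; being unital and completely positive it is a contraction, $\Vert\condex_\rho\Vert=1$, and its normality is exactly the $\sigma$-weak (and, on bounded sets, weak-operator) continuity announced before the proposition. Property 1 is then immediate, since $\condex_\rho(A\otimes\idty)=\rho(\idty)A=A$ because $\rho$ is normalized. Property 3 holds because $C$ and $D$ act only on the first tensor factor and hence commute with the partial evaluation against $\rho$ on the second factor, giving $\condex_\rho(CAD)=C\condex_\rho(A)D$ for all $C,D\in\cB(\cH_1)$.

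For the estimate I would compare $\condex_\rho(A)$ with the approximant produced by Lemma \ref{lem:main}. Suppose $A$ satisfies the commutator bound with constant $\epsilon$. By property 2 of Lemma \ref{lem:main} there is $A'=\condex(A)\in\cB(\cH_1)$ with $\Vert A'\otimes\idty-A\Vert\leq\epsilon\Vert A\Vert$. Applying the contraction $\condex_\rho$ and using $\condex_\rho(A'\otimes\idty)=A'$ from property 1, I obtain $\Vert\condex_\rho(A)-A'\Vert=\Vert\condex_\rho(A-A'\otimes\idty)\Vert\leq\epsilon\Vert A\Vert$. Since tensoring with $\idty$ preserves the operator norm, the triangle inequality then yields
\[
\Vert\condex_\rho(A)\otimes\idty-A\Vert\leq\Vert(\condex_\rho(A)-A')\otimes\idty\Vert+\Vert A'\otimes\idty-A\Vert\leq\epsilon\Vert A\Vert+\epsilon\Vert A\Vert=2\epsilon\Vert A\Vert,
\]
which is exactly \eqref{eq:continuous}.

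The argument carries no deep obstacle once Lemma \ref{lem:main} is available; the points requiring care are foundational. The first is justifying that $\condex_\rho$ is genuinely defined on all of $\cB(\cH_1\otimes\cH_2)$ and is a contraction: this is the standard existence of the slice map associated with a normal functional, and normality is precisely what guarantees both the boundedness and the weak-operator continuity we want. The second is mere bookkeeping with the identification of $\condex_\rho(A)\in\cB(\cH_1)$ against $\condex_\rho(A)\otimes\idty\in\cB(\cH_1\otimes\cH_2)$ in the norm estimate. The factor $2$, in contrast to the sharp $\epsilon$ of Lemma \ref{lem:main}, is the unavoidable price of routing the comparison through the fixed approximant $A'$: one $\epsilon$ is spent passing from $A$ to $A'\otimes\idty$, and another when $\condex_\rho$ moves $A$ back toward $A'$.
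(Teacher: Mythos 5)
Your proof is correct and takes essentially the same route as the paper's: both compare $\condex_\rho(A)$ with the approximant $\condex(A)$ from Lemma \ref{lem:main}, use that $\condex_\rho$ is a contraction fixing $\cB(\cH_1)\otimes\idty$ to get $\norm{\condex_\rho(A)-\condex(A)}\leq\epsilon\norm{A}$, and conclude by the triangle inequality. The extra remarks on the well-definedness and normality of the slice map and on properties 1 and 3 are correct and harmless.
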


\begin{proof}
By the lemma we have
\be
\norm{\condex(A)-\condex_\rho(A)} =\norm{\condex_\rho\Bigl(\condex(A)\otimes\idty-A\Bigr)}
  \leq\norm{\condex(A)\otimes\idty-A}
  \leq\epsilon\norm A.
\nonumber
\ee
Therefore, it follows that
\be
\norm{\condex_\rho(A)\otimes\idty-A}\leq\norm{(\condex(A)-\condex_\rho(A))\otimes\idty}+\norm{\condex(A)\otimes\idty-A}
 \leq2\epsilon\norm A.
\nonumber
\ee
\end{proof}

It is unclear whether the factor $2$ in equation \eqref{eq:continuous} is really needed. Numerical evidence suggests that maybe it is even true with the same bound as in the Lemma. By an approximation argument it would suffice to show this in finite dimension. We tried low dimensional ($3\otimes7$) random matrices $A$, choosing for $\rho$ the state farthest removed from the tracial state, namely a pure one. The random matrices where drawn from the unitarily invariant ensemble. Then $\delta=\norm{\condex_\rho(A)-A}$ is readily computed, and in all cases we found unitary operators $U\in\cB(\cH_2)$ such that $\norm{A-(\idty\otimes U^*)A(\idty\otimes U)}\geq\delta$. We are, of course aware, that this is far from conclusive, since by measure concentration random matrices in high dimension might easily avoid the regions of counterexample with high probability. That is, for most of the cases with respect to the unitarily invariant measure the factor $2$ is not needed, but counterexamples nevertheless might exist.

\section{Application to infinite systems}\label{sec:infinite}

So far, we have discussed two-component systems with a Hilbert space of the form $\cH_1\otimes \cH_2$. In applications the decomposition into two components often  corresponds to selecting a
finite subsystem of an infinite system \cite{bachmann:2011}.

Consider a collection of systems labeled by a countable set $\Gamma$ (e.g., $\Gamma$
is often taken to be the $d$-dimensional hypercubic lattice $\mathbb{Z}^d$.) Associated with
each site $x \in \Gamma$,
there is a quantum system with a Hilbert space $\mathcal{H}_x$. For finite $\Lambda \subset
\Gamma$, we define
\begin{equation} \label{eq:hilale}
\mathcal{H}_{\Lambda} = \bigotimes_{x \in \Lambda} \mathcal{H}_x \quad \mbox{and} \quad
\mathcal{A}_{\Lambda} = \bigotimes_{x \in \Lambda} \mathcal{B}( \mathcal{H}_x)
\end{equation}
where $\mathcal{B}( \mathcal{H}_x)$ denotes the bounded linear operators on
$\mathcal{H}_x$. For $\Lambda_0\subset\Lambda\subset\Gamma$, $\mathcal{A}_{\Lambda_0}$
can be identified in the natural way with $\mathcal{A}_{\Lambda_0}\otimes
 \idty_{\Lambda \setminus \Lambda_0}\subset \mathcal{A}_{\Lambda}$.
One then defines
\begin{equation} \label{eq:localg}
\mathcal{A}_{\rm loc} = \bigcup_{\Lambda \subset \Gamma} \mathcal{A}_{\Lambda}
\end{equation}
 as an inductive limit taken over the net of all finite subsets of $\Gamma$. The completion
 of $\mathcal{A}_{\rm loc}$ with respect to the operator norm is a $C^*$-algebra, which we
 will denote by $\cA_\Gamma$.

The strategy of Proposition \ref{prop:continuous} now allows us to define a family of
maps $\condex_\Lambda$, for finite $\Lambda\subset\Gamma$,
such that $\condex_\Lambda :\cA_\Gamma\to \cA_\Lambda$, in a way compatible
with the embeddings $\cA_{\Lambda_0}\subset\cA_{\Lambda}$,
for $\Lambda_0\subset\Lambda$, i.e., such that
\be\label{compatibility}
\condex_{\Lambda_0}=\condex_{\Lambda_0}\circ\condex_\Lambda, \mbox{ if } \Lambda_0\subset\Lambda.
\ee
We will therefore choose a family of normal states on $\cB(\cH_x)$,
or equivalently, a family of density matrices, $\left( \rho_x \right)_{x\in \Gamma}$ and
let $\rho_\Gamma$ be the corresponding a product state on $\cA_\Gamma$.
For each $\Lambda\subset\Gamma$, let $\rho_{\Lambda^c}$ denote the
restriction of $\rho_\Gamma$ to $\cA_{\Gamma\setminus \Lambda}$.
On $\cA_{\rm loc}$, $\condex_\Lambda$ is then defined by setting
\begin{equation}\label{PiLambda}
\condex_\Lambda=\id_{\cA_\Lambda}\otimes \rho_{\Lambda^c}\, .
\end{equation}
and it is straightforward to see that the $\condex_\Lambda$ defined
in this way satisfy the compatibility property \eq{compatibility}.
All these maps are contractions and extend uniquely to $\cA_\Gamma$
by continuous extension, with preservation of the compatibility property.
Clearly, $\condex_\Lambda$ can be considered as a map $\cA_\Gamma\to \cA_\Gamma$
with $\ran \condex_\Lambda=\cA_\Lambda\subset\cA_\Gamma$. Note that the maps
$\condex_\Lambda$ depend on the choice of normal states $\rho_x$. Since the properties
we are interested in here do not explicitly depend on this choice, we supress it
in the notation.

The following property is a direct consequence of the construction of the $\condex_\Lambda$
and Proposition \ref{prop:continuous}.

\begin{cor}
Let $\Lambda\subset\Gamma$be finite.
Suppose $\epsilon \geq 0$ and $A\in\cA_\Lambda$ are such that
$$
\Norm{[A,\idty\otimes B]} \leq \epsilon\Vert A\Vert \Vert B\Vert\, \mbox{ for all } B\in\cA_{\Gamma\setminus\Lambda}.
$$
Then, with $\condex_\Lambda$ the map defined in \eq{PiLambda}, we have
$\condex_\Lambda(A)\in\cA_\Lambda$ and
\be
\Vert \condex_\Lambda(A) -A\Vert\leq 2\epsilon\Vert A\Vert.
\ee
\end{cor}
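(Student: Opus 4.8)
The plan is to deduce the estimate directly from Proposition~\ref{prop:continuous}, after reducing the quasi-local observable to a strictly local one and recognizing $\condex_\Lambda$ as one of the maps $\condex_\rho$ treated there. We understand the observable to range over $\cA_\Gamma$; for $A\in\cA_\Lambda$ itself the commutator $[A,\idty\otimes B]$ vanishes for every $B\in\cA_{\Gamma\setminus\Lambda}$ and there is nothing to prove, with $\condex_\Lambda(A)=A$.

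First I would reduce to the case $A\in\cA_{\Lambda'}$ for some finite $\Lambda'\supseteq\Lambda$. Since $\cA_{\rm loc}$ is norm-dense in $\cA_\Gamma$, choose local $A_n\to A$ with $A_n\in\cA_{\Lambda_n}$ and $\Lambda_n\supseteq\Lambda$ finite. Writing $[A_n,\idty\otimes B]=[A,\idty\otimes B]+[A_n-A,\idty\otimes B]$ gives the perturbed bound $\norm{[A_n,\idty\otimes B]}\le(\epsilon\norm A+2\norm{A_n-A})\norm B$, so once the local case is settled it yields $\norm{\condex_\Lambda(A_n)-A_n}\le 2\epsilon\norm A+4\norm{A_n-A}$. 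Because each $\condex_\Lambda$ is a contraction, letting $n\to\infty$ recovers $\norm{\condex_\Lambda(A)-A}\le2\epsilon\norm A$ for the general observable.

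For local $A\in\cA_{\Lambda'}$, set $M=\Lambda'\setminus\Lambda$, split $\cH_{\Lambda'}=\cH_\Lambda\otimes\cH_M=:\cH_1\otimes\cH_2$, and let $\rho=\rho_M:=\bigotimes_{x\in M}\rho_x$, a normal state on $\cB(\cH_2)$. Since $\rho_{\Lambda^c}=\rho_M\otimes\rho_{\Lambda^c\setminus M}$ is a product state, evaluation on $A_0\otimes B$ with $A_0\in\cA_\Lambda$ and $B\in\cA_M$ shows that $\condex_\Lambda$ of \eq{PiLambda}, restricted to $\cA_{\Lambda'}$, agrees with the map $\condex_\rho=\id\otimes\rho$ of Proposition~\ref{prop:continuous} for this choice of $\cH_1,\cH_2$, while $A\in\cA_{\Lambda'}\subseteq\cB(\cH_1\otimes\cH_2)$. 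The inclusion $\condex_\Lambda(A)\in\cA_\Lambda$ is immediate, since the range of $\condex_\Lambda$ is $\cA_\Lambda$ by construction.

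The point that needs care, and which I expect to be the main obstacle, is that Proposition~\ref{prop:continuous} requires the commutator bound for \emph{all} $B\in\cB(\cH_2)=\cB(\cH_M)$, whereas the hypothesis supplies it only for $B$ in the subalgebra $\cA_M\subseteq\cA_{\Gamma\setminus\Lambda}$. Here the finiteness of $M$ is decisive: $\cA_M=\bigotimes_{x\in M}\cB(\cH_x)$ is $\sigma$-weakly dense in $\cB(\cH_M)$, so by the Kaplansky density theorem any $B\in\cB(\cH_2)$ with $\norm B\le1$ is the $\sigma$-strong, hence $\sigma$-weak, limit of a net $B_\alpha\in\cA_M$ from the unit ball. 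Since left and right multiplication by the fixed operator $A$ are $\sigma$-weakly continuous, $[A,\idty\otimes B_\alpha]\to[A,\idty\otimes B]$ $\sigma$-weakly, and the $\sigma$-weak lower semicontinuity of the operator norm then yields $\norm{[A,\idty\otimes B]}\le\liminf_\alpha\norm{[A,\idty\otimes B_\alpha]}\le\epsilon\norm A$. With the commutator bound thus established on all of $\cB(\cH_2)$, Proposition~\ref{prop:continuous} applies and gives $\norm{\condex_\rho(A)-A}\le2\epsilon\norm A$, that is $\norm{\condex_\Lambda(A)-A}\le2\epsilon\norm A$ for local $A$; combined with the reduction above this completes the proof.
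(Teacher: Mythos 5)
Your proof is correct and follows essentially the route the paper intends: the paper offers no written proof at all, stating only that the corollary is ``a direct consequence of the construction of the $\condex_\Lambda$ and Proposition~\ref{prop:continuous}.'' You have simply supplied the details that declaration suppresses --- the reading of $A\in\cA_\Lambda$ as a typo for $A\in\cA_\Gamma$, the norm-density reduction to strictly local $A$, the identification of $\condex_\Lambda$ on $\cA_{\Lambda'}$ with a slice map $\id\otimes\rho_M$, and the Kaplansky-plus-lower-semicontinuity step extending the commutator bound from $\cA_M$ to all of $\cB(\cH_M)$ --- all of which are sound.
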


We remark that if $\dim \cH_x <\infty$, for all $x\in\Gamma$, i.e., when $\cA_\Gamma$ is
a UHF algebra, we can take the normalized partial trace (maximally mixed state) for each of
the $\rho_x$ and replace $2\epsilon$ by $\epsilon$ by the argument given in the
introduction. In either case, it is easy to construct representations of $\cA_\Gamma$
in which the maps $\condex_\Lambda$ are be represented by weakly continuous maps.
Again, it is an interesting question whether the replacement of the `error' $\epsilon$ by 
$2\epsilon$ is really necessary in order to be able to treat the situation with infinite-dimensional
component systems. 

In the next section we discuss the relation of our construction of a conditional
expectations $\condex$, with the property $P$ introduced by Schwartz almost fifty years ago 
\cite{schwartz:1963}. 

\section{Extension to general von Neumann algebras}

The ideas in the main Lemma can be extended to the wider setting of von Neumann algebras, when we replace the algebra $\cB(\cH_2)$ by a general von Neumann algebra $\cM$ on the Hilbert space $\cH$, which replaces $\cH_1\otimes\cH_2$. As usual, $\cM'$ denotes the commutant of $\cM$, i.e., the von Neumann algebra of bounded operators commuting with $\cM$.

Some of the following equivalences are known deep results. Our addition is the last item. Let us mention that while some implications in the following proposition are only valid in the case of $\cH$ being separable, the others do not depend on this assumption. This will be made clear in the proof.

\begin{prop} Let $\cM\subset\cB(\cH)$ be a von Neumann algebra with trivial center. Then the following properties are equivalent:
\begin{enumerate}
\item $\cM$ is hyperfinite, i.e., the weak closure of an increasing family of matrix algebras all sharing the same identity.
\item $\cM$ has property P \cite{sakai:1971}, i.e., for every $X\in\cB(\cH)$ the weak*-closed convex hull of $\{U^*XU\mid U\in\cM\ \mbox{unitary}\}$ contains an element of $\cM'$.
\item $\cM'$ is injective, i.e., there is a linear map $\condex:\cB(\cH)\to\cM'$ such that $\norm{\condex(X)}\leq\norm X$, and $\condex(A)=A$ for $A\in\cM'$.
\item There is a linear map $\condex:\cB(\cH)\to\cM'$ such that for all $X\in\cB(\cH)$
\begin{equation*}\label{coMbound}
    \norm{\condex(X)-X}\leq\sup\Bigl\{\norm{[X,U]}\Bigm\vert U\in\cM\ \mbox{unitary}\Bigr\} .
\end{equation*}
Furthermore we have that $\condex$ is completely positive with norm $1$ and fulfills
\begin{equation*}
	\condex(AXB)=A\condex(X)B \text{  for  } A,B\in\condex(\cN) .
\end{equation*}
\end{enumerate}

\end{prop}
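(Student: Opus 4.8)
The plan is to take the equivalences $(1)\Leftrightarrow(2)\Leftrightarrow(3)$ as given by the cited classical results---the mutual equivalence of hyperfiniteness of $\cM$, property P, and injectivity of $\cM'$, in which $(3)\Rightarrow(1)$ is Connes' theorem and hence the only implication forcing $\cH$ to be separable---and to supply the two implications linking these to the new item $(4)$. One direction, $(4)\Rightarrow(3)$, is immediate: if $X\in\cM'$ then $[X,U]=0$ for every unitary $U\in\cM$, so the right-hand side of the displayed bound vanishes and $\condex(X)=X$; together with $\ran\condex\subset\cM'$ and $\norm\condex=1$ this is exactly injectivity of $\cM'$, and no separability is used.

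The substance is the construction of the map in $(4)$, which I would obtain from $(1)$ by transcribing the proof of Lemma~\ref{lem:main}. Write $\cM$ as the weak closure of an increasing family $(\cM_\alpha)$ of matrix subalgebras sharing the unit. Each unitary group $\cU(\cM_\alpha)$ is compact and carries a normalized Haar measure $dU$, and I set
$$
\condex_\alpha(X)=\int_{\cU(\cM_\alpha)}\mkern-12mu dU\ U^*XU .
$$
Each $\condex_\alpha$ is unital, completely positive and of norm $1$, fixes $\cM'$ pointwise, and satisfies $\condex_\alpha(AXB)=A\condex_\alpha(X)B$ for $A,B\in\cM'\subset\cM_\alpha'$. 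The approximation estimate is the one from the introduction: since $U^*XU-X=U^*[X,U]$,
$$
\norm{\condex_\alpha(X)-X}\le\int_{\cU(\cM_\alpha)}\mkern-12mu dU\ \norm{[X,U]}\le\sup\bigl\{\norm{[X,U]}\bigm\vert U\in\cM\ \text{unitary}\bigr\}.
$$
Moreover, invariance of the Haar measure forces $\condex_\alpha(X)$ to commute with all of $\cU(\cM_\alpha)$, hence with $\cM_\alpha$, exactly as in \eq{EPQcom}.

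I would then pass to a universal subnet of $(\condex_\alpha(X))$ and call its weak-$*$ limit $\condex(X)$; as in Lemma~\ref{lem:main} this defines a linear, completely positive, norm-$1$ map. The module property and the fixing of $\cM'$ survive because left and right multiplication by a fixed operator is weak-$*$ continuous, and the commutator bound survives because it confines each $\condex_\alpha(X)$ to a weak-$*$ closed norm ball centred at $X$. Finally, passing the commutation relations to the limit gives $[\condex(X),B]=0$ for every $B\in\bigcup_\alpha\cM_\alpha$; since this union is weakly dense in $\cM$, the limit lies in $\cM'$, so that $\ran\condex\subset\cM'$. Because the argument runs over nets throughout, it needs no separability assumption.

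The main obstacle is precisely this final passage to the limit. Since $\condex$ is only a weak-$*$ cluster point, each of its required features---range in $\cM'$, the module property, and above all the quantitative commutator bound---must be checked to be stable under the limit, the commutator estimate being the most delicate and resting on the weak-$*$ closedness of norm balls. This is the same difficulty already met, and resolved, in Lemma~\ref{lem:main}.
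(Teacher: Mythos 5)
Your proposal is correct and follows essentially the same route as the paper: the classical equivalences $(1)\Leftrightarrow(2)\Leftrightarrow(3)$ are quoted, $(4)\Rightarrow(3)$ is the same one-line observation that the commutator bound forces $\condex$ to fix $\cM'$ pointwise, and $(1)\Rightarrow(4)$ is the Haar-averaging and universal-subnet argument of Lemma~\ref{lem:main} transplanted to the unitary groups of the matrix algebras $\cM_\alpha$, with the quantitative bound surviving the limit because norm balls are weak-$*$ closed. The only divergence is minor: where you verify complete positivity and the $\cM'$-module property directly on the approximants $\condex_\alpha$ (using that unitaries of $\cM_\alpha\subset\cM$ commute with $\cM'$) and carry them through the weak-$*$ limit, the paper instead invokes the Tomiyama--Evans--Lewis theorem that a norm-one projection onto a von Neumann subalgebra containing the identity is automatically a completely positive conditional expectation; both work, yours being marginally more self-contained, though you should take the universal subnet once on the index net (as Lemma~\ref{lem:main} does) rather than separately for each $X$, so that the resulting $\condex$ is a well-defined linear map.
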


\begin{proof}
	The next three steps also apply to the case where $\cH$ is not separable.
	
	(1) implies (2) follows by an easy application of the fact that $\cM$ is the weak closure of matrix algebras $\cM_\alpha$, and each of these algebras obviously has property P. But then $\cM$ has property P since its commutant is the intersection of the commutants $\cM'_\alpha$, see also \cite{sakai:1971}, Corollary 4.4.17. 
	
	That (1) implies the first equation in (4) is proven along the lines of the proof of lemma \ref{lem:main}, using again the fact that $\cM$ is the weak closure of matrix algebras $\cM_\alpha$, for which the bound is immediate. But if we choose $X \in \cM'$ we find that $\norm{\condex(X)-X}=0$. The second identity then follows from the fact that if $\condex:\cN\to\cN$ is a projection on a von Neumann algebra such that the range $\condex(\cN)$ is a von Neumann subalgebra containing the identity, then  $\condex$ has to be completely positive with norm $1$, and satisfies the identity $\condex(AXB)=A\condex(X)B$ for $A,B\in\condex(\cN)$ \cite{EvansLewis,tomiyama}. The implication (4) to (3) follows immediately from the last argument. 
	
	The last two implications do require a separable Hilbert space $\cH$. 
	
	Then the equivalence of the notions of hyperfiniteness and injectivity is a deep result by Connes \cite{connes} ( see \cite{haagerup} for a simpler proof). It is easily seen that $\cM$ is injective if and only if $\cM'$ is injective, see \cite{takesaki}, Proposition XV.3.2. Hence, (3) also implies (1). 
	
	The missing implication, \emph{i.e.} (2) implies (3), was proven by Schwartz, in the same paper where he also defined  property P \cite{schwartz:1963}.
\end{proof}

In the above situation we have that, for $A\in\cM$ and $B\in\cM'$, we get $B\condex(A)=\condex(B A)=\condex(A B)=\condex(A)B$, i.e., $\condex(A)\in\cM'\cap\cM''=\Cx\idty$. Hence there is a state $\rho$ such that $\condex(A)=\rho(A)$, and thus
\begin{equation}\label{stateEval}
    \condex(AB)=\rho(A)B\quad\mbox{for\ }A\in\cM,\ B\in\cM'.
\end{equation}
Since the linear hull of the set of elements $AB$ is weak*-dense in $\cB(\cH)$ it would seem that via this formula the state $\rho$ determines $\condex$. However, that is deceptive, because $\condex$ need not be normal (i.e., weak*-continuous). Indeed, the {\it only} case in which $\condex$ is normal, is the case  described in Proposition~\ref{prop:continuous}. The state $\rho$ is then obviously also normal. That $\cM'=\condex(\cB(\cH)))$ must be type one follows from a general result of Tomiyama that the von Neumann type (I, II, or III) cannot increase under normal conditional expectations (see also \cite[Example 1.1]{frank} and \cite[Theorem IV.2.2]{rfwDiss}). Note also that by evaluating with a normal state $\sigma$ we can obtain product states $AB\mapsto\rho(A)\sigma(\condex(B))$ between $\cM$ and $\cM'$ when $\condex$ is normal, such product states could also be made normal, which also entails that $\cM$ is type I \cite{buchholz}.

It follows from this discussion that one can, in general, not use (\ref{stateEval}) to define $\condex$ with a normal state $\rho$: except in the type I case the map $\condex$ densely defined by (\ref{stateEval}) cannot have a continuous extension to $\cB(\cH)$.

\subsection*{Acknowledgments}
The authors gratefully acknowledge the warm hospitality and stimulating atmosphere
of Institut Mittag-Leffler, where the results reported in this paper were obtained during
the program on Quantum Information Theory in Fall 2010. B.N. also acknowledges
the hospitality and support through an ESI Senior Research Fellowship from the
Erwin Schr\"odinger International Institute for Mathematical Physics, Vienna,
during the writing this article.

\end{document}